\newcommand{\cF}{\mathcal{F}}
\newcommand{\cL}{\mathcal{L}}
\newcommand{\cS}{\mathcal{S}}
\newcommand{\cU}{\mathcal{U}}
\spnewtheorem{myclaim}{Claim}{\itshape}{\upshape}
\title{Efficient Enumeration of the Directed Binary Perfect Phylogenies from Incomplete Data\thanks{Partially supported by Grant-in-Aid for Scientific Research from Ministry
  of Education, Science and Culture, Japan, and
  Japan Society for the Promotion of Science, and by Exploratory Research for Advanced Technology (ERATO) from Japan Science and Technology Agency.
The extended abstract version of this paper appears in 11th International Symposium on Experimental Algorithms (SEA 2012) \cite{thispaper-sea}.
}}
\titlerunning{Enumeration of the Directed Binary Perfect Phylogenies}
\author{%
Masashi Kiyomi\inst{1}
\and
Yoshio Okamoto\inst{2}
\and
Toshiki Saitoh\inst{3}
}
\institute{
School of Information Science,
Japan Advanced Institute of Science and Technology,
Nomi, Japan
\and
Center for Graduate Education Initiative,
Japan Advanced Institute of Science and Technology,
Nomi, Japan
\and
ERATO Minato Discrete Structure Manipulation System Project,
Japan Technology and Science Agency,
Sapporo, Japan
}
\date{\today}
\begin{document}

\maketitle

\begin{abstract}
We study a character-based phylogeny reconstruction problem when 
an incomplete set of data is given.
More specifically, we consider the situation under the 
directed perfect phylogeny 
assumption with binary characters in which for some species the 
states of some characters are missing.
Our main object is to give an efficient algorithm to enumerate (or list)
all perfect phylogenies that can be obtained when the missing entries are
completed.
While a simple branch-and-bound algorithm (B{\&}B) shows a theoretically 
good performance, we propose another approach based on a zero-suppressed binary 
decision diagram (ZDD).
Experimental results on randomly generated data exhibit that 
the ZDD approach outperforms B{\&}B\@.
We also prove that counting the number of phylogenetic trees consistent with 
a given data is \#{P}-complete, thus providing an evidence that an efficient 
random sampling seems hard.
\end{abstract}

\section{Introduction}

One of the most important problems in phylogenetics is reconstruction of 
phylogenetic trees.
In this paper, we focus on the character-based approach.
Namely, each species is described by their characters, and a mutation corresponds to 
a change of characters.
However, in the real-world data not all states of all characters are observable or
reliable, which makes the data incomplete.
Thus, we need a methodology that can cope with such incompleteness.

Following Pe'er et al.\ \cite{DBLP:journals/siamcomp/PeerPSS04},
we work with the \emph{perfect phylogeny} assumption, which means that the set of all nodes
with the same character state induces a connected subtree.
All characters are \emph{binary}, namely take only two values.
Without loss of generality, assume that these two values are encoded by $0$ and $1$.
Then, the phylogeny is \emph{directed} in a sense that
for each character a mutation from $0$ to $1$ is possible only once, but
a mutation from $1$ to $0$ is impossible
(this is also called the Camin--Sokal parsimony \cite{caminsokal}).
We consider the situation where for some species the states of some
characters are unknown.
Under this setting,
Pe'er et al.\ \cite{DBLP:journals/siamcomp/PeerPSS04}
provided a polynomial-time algorithm to reconstruct a 
phylogenetic tree that can be obtained when the unknown states are 
completed, if it exists.

Although their algorithm can find a phylogenetic tree efficiently, it does not 
take the likelihood into account.
This motivates people to look at optimization problems; namely we may introduce an objective
function (or an evaluation function) and try to find a perfect phylogeny that 
maximizes the value of the function.
For example, Gusfield et al.\ \cite{DBLP:conf/cocoon/GusfieldFB07} looked at such an 
optimization problem and formulated it as an integer linear program.
One big issue here is that these optimization problems tend to be NP-hard, and thus
we cannot expect to obtain polynomial-time algorithms.
Therefore, we need some compromise.
If we insist on efficiency, then we need to sacrifice the quality of an obtained
solution.
This approach leads us to approximation algorithms.
If we insist on optimality, then we need to sacrifice the running time.
This approach leads us to exponential-time exact algorithms.
However, techniques in the literature as Gusfield et al.\ \cite{DBLP:conf/cocoon/GusfieldFB07} with these approaches use specific 
structures of the form of objective functions.

\subsection{Our Results}

The focus of this paper is the exact approach.
However, unlike the previous work, 
we aim at \emph{enumeration algorithms}, which give a more flexible framework 
for scientific discovery independent of the form of objective functions.
The use of enumeration algorithms is highlighted in data mining and artificial 
intelligence.
For example, the apriori algorithm by Agrawal and Srikant
\cite{DBLP:conf/vldb/AgrawalS94} enumerates all maximal frequent 
itemsets in a transaction database.
It is not expected that such enumeration algorithms run faster than non-enumeration 
algorithms.  
Therefore, the goal of this paper is to examine
 a possibility and a limitation of enumerative approaches.

One of the difficulties in designing efficient enumeration algorithms is to avoid 
duplication.
Suppose that we are to output an object, and need to check if this object was already
output or not.
If we store all objects that we output so far, then we can check it by going through
them.
However, storing them may take too much space, and going through them may take too much time.
The number of obejcts is typically exponentially large.
Our algorithm cleverly avoids such checks, but still ensures 
exhaustive enumeration without duplication.

It is rather straightforward to give an algorithm with theoretical 
guarantee such as polynomiality.
Namely, a simple branch-and-bound idea gives an algorithm that 
has a running time polynomial in the input size and 
linear in the output size.
Notice that an enumeration algorithm outputs all the objects, and thus 
the running time needs to be at least as high as the number of output objects.
Thus, the linearity in the output size cannot be avoided
in any enumeration algorithms.

However, such a theoretically-guaranteed algorithm does not necessarily 
run fast in practice.
Thus, we propose another algorithm that is based on 
a zero-suppressed binary decision diagram (ZDD).
A ZDD was introduced by Minato \cite{DBLP:conf/dac/Minato93}.
It is a directed graph that has a  
similar structure to a binary decision diagram (BDD).
While a BDD is used to represent a boolean function 
in a compressed way, a ZDD only represents
the satisfying assignments of the function in a compressed way
(a formal definition will be given in Section \ref{sec:zdd}).
Furthermore, we may employ a lot of operations on ZDDs, called the family algebra,
which can be used for efficient filtering and optimization with respect to 
some objective functions.
A book of Knuth \cite{knuth4-1} devotes one section to ZDDs, and gives
numerous applications as exercises.

Although the size of a constructed ZDD is bounded by a polynomial of the number of 
output objects,
we cannot guarantee that the size of a ZDD that is
 created at the intermediate steps in the course 
of our algorithm is bounded.
This means that we cannot guarantee a polynomial-time running time
(in the input size and the output size) for our ZDD algorithm.
However, the crux here is that the size of a constructed ZDD can be 
much smaller than the number of output objects.
We exhibit this phenomenon in two ways.
First, we give an example in which the number of phylogenetic trees  
is exponential in the input size, but the size of the constructed 
ZDD is polynomial in the input size.
Second, we perform experiments on randomly generated data, and the result 
shows that our ZDD algorithm can solve more instances than a
branch-and-bound algorithm.
This suggests that the ZDD approach is quite promising.

Having enumeration algorithms, we can also count the number of 
phylogenetic trees.
In particular, the branch-and-bound algorithm can count them in polynomial time 
in the input size and the output size.
This naturally raises the following question:
Is it possible to count them in polynomial time only in the input size?
Note that since we only compute the number, we do not have to output each 
object one by one, and thus the linearity of the running time in the output 
size could be avoided.
Such a polynomial-time counting algorithm could be combined with 
a branch-and-bound enumeration algorithm to design a random sampling algorithm.
Namely, when we branch, we count the number of outputs in each subinstance
in polynomial time, and choose one subinstance at random according to the 
computed numbers.
For more on the connection of counting and sampling, 
we refer to a book by Sinclair \cite{sinclair}. 

We prove that this is unlikely.
Namely, counting the number of phylogenetic trees for the incomplete directed 
binary perfect phylogeny is \#{P}-complete.
The complexity class \#{P} contains all counting problems in which a counted object 
has a polynomial-time verifiable certificate.
Since no \#{P}-complete problem is known to be solved in polynomial time, 
the \#{P}-completeness suggests the unlikeliness for the problem to be solved in 
polynomial time.

\subsection{Graph Sandwich}

Pe'er et al.\ \cite{DBLP:journals/siamcomp/PeerPSS04} rephrased 
the incomplete directed binary perfect phylogeny problem
as a bipartite graph sandwich problem.
The graph sandwich problem, in general, was introduced by Golumbic
et al.\ \cite{DBLP:journals/jal/GolumbicKS95}.
In the graph sandwich problem, we fix a class $C$ of graphs, and
we are given two graphs $G_1=(V,E_1), G_2=(V,E_2)$ such that
$E_1 \subseteq E_2$.
Then, we are asked to find a graph $G=(V,E)\in C$ such that
$E_1 \subseteq E \subseteq E_2$.
Golumbic et al.\ \cite{DBLP:journals/jal/GolumbicKS95} proved that
even for some
restricted classes of graphs, the problem is NP-complete.
The subsequent results by various researchers also show that 
for a lot of cases the problem is NP-complete, even though the recognition problem 
for those classes can be solved in polynomial time
(we will not include here a long list of literature).
Thus, the result by Pe'er et al.\ \cite{DBLP:journals/siamcomp/PeerPSS04} 
gives a rare example for which the graph sandwich problem can be solved 
in polynomial time.

Recently, the graph sandwich enumeration problem has been studied.
Kijima et al.\ \cite{DBLP:journals/tcs/KijimaKOU10} studied the graph sandwich 
enumeration problem for chordal graphs.
They provided efficient algorithms when $G_1$ or $G_2$ is chordal, where 
``efficient'' means that it runs in polynomial time in the input size and linear 
time in the output size.
Their approach was generalized by Heggernes et al.\ \cite{DBLP:journals/jco/HeggernesMPS11}
to all sandwich-monotone graph classes.
In this respect, this paper gives another example of 
efficient graph sandwich enumeration algorithms.

\subsection{Organization}

In Section \ref{sec:prelim}, 
we introduce the problem more formally.
In Section \ref{sec:zdd}, we provide the algorithm based on 
ZDDs, and give an example in which the compression really works.
In Section \ref{sec:hard}, we prove that the counting version is 
intractable.
Section \ref{sec:experiment} gives experimental results.
We conclude in the final section.

\section{Preliminaries}
\label{sec:prelim}

Due to the pairwise compatibility lemma (see, e.g., \cite{DBLP:reference/algo/Jansson08}), 
we may define our problem in terms of laminars.
We adapt this view throughout the paper.

A sequence $\cS=(S_1,\ldots,S_m)$ of subsets of a finite set $S$ is a \emph{laminar} if for 
every two $i,j\in \{1,\ldots,m\}$ the intersection $S_i \cap S_j$ is either $S_i$, $S_j$, or $\emptyset$.%
\footnote{Usually, a laminar is defined as a family of subsets, but for our purpose it is 
convenient to define as a sequence of subsets.}
In the \emph{incomplete directed binary perfect phylogeny problem} (IDBPP), we are given two sequences
$\cL=(L_1,\ldots,L_m)$, $\cU=(U_1,\ldots,U_m)$ of $m$ subsets of $S$ such that 
$L_i \subseteq U_i \subseteq S$ for all $i\in\{1,\ldots,m\}$, and the question is to determine whether 
there exists a laminar $\cS=(S_1,\ldots,S_m)$ such that $L_i \subseteq S_i \subseteq U_i$ for all
$i\in\{1,\ldots,m\}$.
We call such a laminar a \emph{directed binary perfect phylogeny} for $(S,\cL,\cU)$.
The IDBPP can be solved in polynomial time \cite{DBLP:journals/siamcomp/PeerPSS04}.

Let us briefly describe the correspondence to phylogenetic trees.
The set $S$ represents the set of species, and the indices $1,\ldots,m$ represent
the characters.
Then, $S_i$ represents the set of species that has the character $i$.
The species in $L_i$ are recognized as those we know having the character $i$, and 
the species in $S\setminus U_i$ are recognized as those we know not having $i$.

In this paper, we consider the following variants that take the same input as the IDBPP\@.
In the \emph{counting version} of IDBPP\@, the objective is to output the number of directed binary 
perfect phylogenies.
In the \emph{enumeration version} of IDBPP\@, the objective is to output all the directed binary perfect 
phylogenies.
Note that enumeration should be exhaustive, and also should not output the same object twice or more.

\section{ZDD Approach}
\label{sec:zdd}

\subsection{Introduction to ZDDs}

Let $f\colon \{0,1\}^N \to \{0,1\}$ be an $N$-variate boolean function with 
boolean variables $x_1,\ldots,x_N$.
We assume a linear order on the variables $\{x_1,\ldots,x_N\}$ as 
$x_i$ precedes $x_j$ if and only if $i<j$.
A \emph{binary decision diagram} (BDD) for $f$, denoted by $B(f)$, 
is a vertex-labeled directed graph with the following properties.

\begin{itemize}
\item There is only one vertex with indegree $0$, called the \emph{root} of $B(f)$.
\item There are only two vertices with outdegree $0$, called the \emph{terminals} of $B(f)$.
\item Each vertex of $B(f)$, except for the terminals, is labeled by a variable  from $\{x_1,\ldots,x_N\}$.
\item One terminal is labeled by $0$ (called the \emph{$0$-terminal}), and the other terminal is labeled by $1$
(called the \emph{$1$-terminal}).
\item Each edge of $B(f)$ is labeled by $0$ or $1$.
An edge labeled by $0$ is called a \emph{$0$-edge}, and an edge labeled by $1$ is called a 
\emph{$1$-edge}.
\item Each vertex of $B(f)$, except for the terminals, has exactly one outgoing $0$-edge and
exactly one outgoing $1$-edge.
\item If there is a path from a vertex $v$ to a non-terminal vertex $u$ in $B(f)$, then the 
label of $v$ is smaller than the label of $u$.
\item 
A boolean assignment $\alpha\colon \{x_1,\ldots,x_N\} \to \{0,1\}$ satisfies $f$ (i.e., $f(\alpha(x_1), \ldots, \alpha(x_N))=1$)
if and only if there exists a path $P$ from the root to the $1$-terminal in $B(f)$
that satisfies the following condition:
$\alpha(x_i)=1$ if and only if there exists a vertex $v$ on $P$ labeled by $x_i$ 
such that $P$ traverses the $1$-edge leaving $v$.
\end{itemize}

\begin{wrapfigure}{r}{2.5cm}
\centering
\resizebox{2cm}{!}{\includegraphics{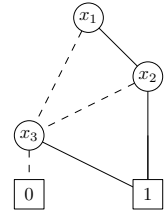}}
\caption{A ZDD for the function $f(x_1,x_2,x_3)=(x_1 \wedge x_2 \wedge \overline{x_3}) \vee (\overline{x_2} \wedge x_3)$.}
\label{fig:zddexample}
\end{wrapfigure}
A BDD for a function $f$ is not unique, and may contain redundant information.
However, the following reduction rules turn a BDD into a smaller equivalent BDD\@.
A \emph{zero-suppressed binary decision diagram} (ZDD) for a function $f$ is a BDD, denoted by $Z(f)$,
for which the reduction rules
cannot be applied.
\begin{enumerate}
\item If the outgoing $1$-edge of a vertex $v$ points to the $0$-terminal
and the outgoing $0$-edge of a vertex $v$ points to a vertex $u$,
then we remove $v$ and its outgoing edges, and reconnect the incoming edges to $v$ to the vertex $u$.
\item If two vertices $v,v'$ have the same label $x_i$, their outgoing $1$-edges point to the same vertex $u_1$, and
their outgoing $0$-edges point to the same vertex $u_0$, then replace $v,v'$ with a single vertex $w$ with label
$x_i$.  The incoming edges to $w$ are those to $v,v'$, the outgoing $1$-edge from $w$ points to $u_1$, 
and the outgoing $0$-edge from $w$ points to $u_0$.
\end{enumerate}
\figurename\ \ref{fig:zddexample} shows an example of a ZDD\@.
The edges are assumed to be directed downward.
A dashed line represents a $0$-edge, and 
a solid line represents a $1$-edge.

The \emph{size} of a ZDD $Z(f)$ is defined as the number of vertices, and denoted by $|Z(f)|$.
It is easy to observe that the size of ZDD $Z(f)$ is $O(NA)$ where $A$ is the number of
satisfying assignments of $f$.
However, this is merely an upper bound, and in practice the size can be much smaller.
Thus, a ZDD for $f$ gives a compressed representation of the family of all satisfying assignments of
$f$.
Especially, if we have a family $\cF$ of subsets of a finite set $S$
and consider a boolean function $f\colon \{0,1\}^S \to \{0,1\}$
such that $f(x) = 1$ if and only if $\{e\in S \mid x_e = 1\} \in \cF$, then
a ZDD for $f$ compactly encodes the family $\cF$.

There are a family of operations that can be performed on ZDDs.
Here, we list those which we use in our algorithm.
Let $f,f'\colon \{0,1\}^{N} \to \{0,1\}$ be boolean functions with variables
$x_1,\ldots,x_N$, and ZDDs $Z(f),Z(f')$  be given.
Then, a ZDD $Z(f \vee f')$ of the disjunction (logical OR) can be obtained
in $O(|Z(f)||Z(f')|)$ time.
Let $f^{[x_i=0]}\colon \{0,1\}^{N-1}\to \{0,1\}$ be a boolean function with 
variables $x_1,\ldots,x_{i-1},x_{i+1},\ldots,x_N$ obtained from $f$ by
$f^{[x_i=0]}(x_1,\ldots,x_{i-1},x_{i+1},\ldots,x_N)=f(x_1,\ldots,x_{i-1},0,x_{i+1},\ldots,x_N)$.
Then, a ZDD $Z(f^{[x_i=0]})$ can be found in $O(|Z(f)|)$ time.
Similarly, we may define $f^{[x_i=1]}$, and a ZDD $Z(f^{[x_i=1]})$ can be found in $O(|Z(f)|)$ time.

\subsection{ZDD-Based Enumeration Algorithm}

We introduce a boolean variable $x_{i,e}$ for each pair $(i,e)$
of an index $i\in\{1,\ldots,m\}$ and an element $e\in S$.
Then, we consider the conjunction (logical AND) of the following conditions, 
which gives rise to a boolean function $f\colon \{0,1\}^{\{1,\ldots,m\}\times S} \to \{0,1\}$.
\begin{enumerate}
\item For every $i\in\{1,\ldots,m\}$, if $e\in L_i$, then
$x_{i,e}=1$.
\item For every $i\in\{1,\ldots,m\}$, if $e\in S \setminus U_i$, then 
$x_{i,e}=0$.
\item For every distinct $i,j\in \{1,\ldots,m\}$, exactly one of the 
following three is satisfied.
\begin{enumerate}
\item For all $e\in S$, if $x_{i,e}=1$, then $x_{j,e}=1$.
\item For all $e\in S$, if $x_{i,e}=0$, then $x_{j,e}=0$.
\item For all $e\in S$, if $x_{i,e}=1$, then $x_{j,e}=0$.
\end{enumerate}
\end{enumerate}
We can easily see that if we set 
$S_i = \{e \in S \mid x_{i,e}=1\}$ for every $i\in\{1,\ldots,m\}$,
then $\cS=(S_1,\ldots,S_m)$ is a directed binary perfect phylogeny 
for $(S,\cL,\cU)$.
Namely, 
the condition 1 translates to $L_i \subseteq S_i$;
the condition 2 translates to $S_i \subseteq U_i$;
the condition 3(a) translates to $S_i \cap S_j = S_i$;
the condition 3(b) translates to $S_i \cap S_j = S_j$;
the condition 3(c) translates to $S_i \cap S_j = \emptyset$.

These conditions naturally induce the following algorithm.

\begin{description}
\item[Algorithm:] $\mathsf{ZDD}(S,\cL,\cU)$
\item[Precondition: ] $S$ is a finite set, $\cL=(L_1,\ldots,L_m)$, $\cU=(U_1,\ldots,U_m)$, each
member of $\cL$ and $\cU$ is a subset of $S$, and 
$L_i \subseteq U_i$ for every $i\in\{1,\ldots,m\}$.
\item[Postcondition:  ] Output a ZDD $Z(f)$ for the boolean function $f$ over
the variables $\{x_{i,e} \mid i\in\{1,\ldots,m\}, e\in S\}$ defined above, which 
encodes all the directed binary perfect phylogenies for $(S,\cL,\cU)$.
\item[Step 0: ] Let $g=\vec{1}$ be the constant-one function.
Construct a ZDD $Z(g)$.
\item[Step 1: ] For each $i\in \{1,\ldots,m\}$ and each $e\in S$, if 
$e\in L_i$, then construct $Z(g^{[x_{i,e}=1]})$ from $Z(g)$ and reset
$g:=g^{[x_{i,e}=1]}$.
\item[Step 2: ] For each $i\in \{1,\ldots,m\}$ and each $e\in S$, if
$e\in S\setminus U_i$, then construct $Z(g^{[x_{i,e}=0]})$ from $Z(g)$ and 
reset $g:=g^{[x_{i,e}=0]}$.
\item[Step 3:] For each distinct $i,j\in \{1,\ldots,n\}$ and each $e\in S$, we perform the following.
\item[Step 3-a:] Let $g_1 := g^{[x_{i,e}=1, x_{j,e}=1]} \vee g^{[x_{i,e}=0]}$. 
Construct $Z(g_1)$ from $Z(g)$.
\item[Step 3-b:] Let $g_2 := g^{[x_{i,e}=0, x_{j,e}=0]} \vee g^{[x_{i,e}=1]}$. 
Construct $Z(g_2)$ from $Z(g)$.
\item[Step 3-c:] Let $g_3 := g^{[x_{i,e}=1, x_{j,e}=0]} \vee g^{[x_{i,e}=0]}$. 
Construct $Z(g_3)$ from $Z(g)$.
\item[Step 3-d:] Construct $Z(g_1 \vee g_2 \vee g_3)$ from $Z(g_1), Z(g_2), Z(g_3)$, and
reset $g:=g_1 \vee g_2 \vee g_3$.
\item[Step 4:] Output $Z(g)$ and halt.
\end{description}

Although the output size $|Z(f)|$ is bounded by $O(mnh)$ where $n=|S|$ and $h$ is the number of 
directed binary perfect phylogenies for $(S,\cL,\cU)$, we cannot guarantee that ZDDs that appear in 
the course of execution have such a bounded size.
Thus, the algorithm could be quite slow
 or could stop due to memory shortage.

\subsection{Example with Huge Compression}

We exhibit an example for which the size of a ZDD is exponentially smaller than 
the number of directed binary perfect phylogenies.
While the example is artificial, this indicates a possibility that our ZDD-based algorithm 
outperforms the branch-and-bound algorithm.

Consider the following example.
Let $S=\{(i,j) \mid i\in\{1,\ldots,n\}, j\in\{0,1,\ldots,k\}\}$.
Then $|S|=(k+1)n$.
For each $i\in \{1,\ldots,n\}$, let 
$L_i = \{(i,0)\}$ and $U_i = \{(i,0),(i,1),\ldots,(i,k)\}$.
As before, let $\cL=(L_1,\ldots,L_n)$ and $\cU=(U_1,\ldots,U_n)$.
\begin{proposition}
The number of directed binary perfect phylogenies for $(S,\cL,\cU)$ is 
$2^{kn}$.
\end{proposition}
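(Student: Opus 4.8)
The plan is to exploit the fact that the upper bounds $U_1,\ldots,U_n$ in this instance are pairwise \emph{disjoint}. Indeed, $U_i \subseteq \{i\}\times\{0,1,\ldots,k\}$, so two sets $U_i$ and $U_j$ with $i\neq j$ differ in their first coordinate and hence satisfy $U_i \cap U_j = \emptyset$. The first thing I would record is this disjointness, since it is the structural feature that drives the whole count.

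Next I would argue that the laminar condition is vacuous on this input. Any candidate laminar $\cS=(S_1,\ldots,S_n)$ consistent with $(S,\cL,\cU)$ must satisfy $S_i \subseteq U_i$ for every $i$, so for distinct $i,j$ we have $S_i \cap S_j \subseteq U_i \cap U_j = \emptyset$, forcing $S_i \cap S_j = \emptyset$. An empty intersection is one of the three allowed cases in the definition of a laminar, so the laminar requirement is automatically met no matter how the individual $S_i$ are chosen. Thus the problem of counting directed binary perfect phylogenies decouples completely into $n$ independent choices, one for each $S_i$.

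It then remains to count, for a fixed $i$, the number of admissible sets $S_i$ with $L_i \subseteq S_i \subseteq U_i$. Since $L_i = \{(i,0)\}$, the element $(i,0)$ is forced to lie in $S_i$, while the $k$ remaining elements $(i,1),\ldots,(i,k)$ of $U_i$ may be included or excluded freely. This gives exactly $2^{k}$ choices for each $S_i$. Because the choices for different indices are independent (by the decoupling established above), the total number of directed binary perfect phylogenies is $\bigl(2^{k}\bigr)^{n} = 2^{kn}$, as claimed.

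I do not expect any genuine obstacle here: the only point requiring care is the verification that pairwise-disjoint sets always constitute a laminar, which is immediate from the definition, and the observation that disjointness of the $U_i$ propagates to any consistent $\cS$. Once the laminar constraint is seen to be trivially satisfied, the remainder is an elementary product rule.
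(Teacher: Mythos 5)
Your proof is correct and follows essentially the same route as the paper's: both arguments hinge on the pairwise disjointness of the $U_i$, which makes the laminar condition automatic, and then count $2^k$ free choices per index for a total of $(2^k)^n = 2^{kn}$. Your write-up is if anything slightly more careful, since it explicitly verifies both directions (every consistent sequence is automatically laminar, and every combination of choices yields a valid phylogeny).
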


\begin{proof}
For two distinct $i,j\in \{1,\ldots,n\}$, it holds that $U_i \cap U_j =\emptyset$.
Therefore, for any subsets $S_i \subseteq U_i \setminus L_i$ and $S_j \subseteq U_j \setminus L_j$,
it holds that $S_i \cap S_j = \emptyset$.
This means that a directed binary perfect phylogeny for $(S,\cL,\cU)$ can be formed by 
choosing an arbitrary subset of $U_i \setminus L_i$ for each $i\in\{1,\ldots,n\}$.
Since $|U_i \setminus L_i|=k$, the number of subsets of $U_i \setminus L_i$ is $2^k$, and 
thus the number of directed binary perfect phylogenies is $(2^k)^n = 2^{kn}$.
\qed
\end{proof}

\begin{proposition}
The size of a ZDD constructed by $\mathsf{ZDD}(S,\cL,\cU)$ is $O(kn)$.
\end{proposition}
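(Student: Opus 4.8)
The plan is to bound the output ZDD $Z(f)$ directly, by first determining which assignments satisfy $f$ and then reading off the diagram. As in the previous proposition, the crucial fact is that $U_i \cap U_j = \emptyset$ for all distinct $i,j$. Using this I would sort the variables $x_{i,e}$ into three classes. Condition 1 forces $x_{i,(i,0)}=1$ for each $i$, giving $n$ forced-$1$ variables. Condition 2 forces $x_{i,e}=0$ whenever $e\notin U_i$ (that is, $e=(i',j')$ with $i'\ne i$); these account for all but $n+kn$ of the variables. The remaining variables $x_{i,(i,j)}$ with $j\in\{1,\ldots,k\}$ are free, and there are exactly $kn$ of them.

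Next I would argue that condition 3 imposes no further restriction. For distinct $i,j$ the sets $S_i,S_j$ contain $(i,0)$ and $(j,0)$ respectively, so both are nonempty, while disjointness of $U_i,U_j$ forces $S_i\cap S_j=\emptyset$. Hence, among the three alternatives of condition 3, exactly the disjoint one holds, and it holds for every choice of the free variables; equivalently, in the element-wise form used in Steps 3-a through 3-d one of $x_{i,e},x_{j,e}$ is always pinned to $0$, so the disjunction $g_1\vee g_2\vee g_3$ never constrains a free variable. Therefore the satisfying assignments of $f$ are exactly those setting the $n$ forced-$1$ variables to $1$, all forced-$0$ variables to $0$, and the $kn$ free variables arbitrarily, matching the $2^{kn}$ phylogenies counted before.

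Finally I would bound $|Z(f)|$ from this product structure via the reduction rules. Every forced-$0$ variable has its $1$-edge directed to the $0$-terminal and is removed by rule 1, contributing no node. Each forced-$1$ variable yields one node (its $0$-edge to the $0$-terminal, its $1$-edge to the next surviving variable), and each free variable yields one node with both edges continuing downward. The essential point is that, at any level, the family of valid completions is independent of the choices recorded along the prefix, so rule 2 collapses all nodes with a common label to one; the diagram is a single chain through the $n+kn$ surviving variables together with the two terminals, whence $|Z(f)|=O(kn)$.

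The main obstacle is the second step: one must verify with some care that condition 3 adds nothing, handling the cases according to where $e$ falls relative to $U_i$ and $U_j$ and confirming that the free variables remain unconstrained -- this is where the disjointness hypothesis does all the work. The counting in the third step is then routine, the only delicacy being the correct application of zero-suppression so that forced-$0$ variables disappear while forced-$1$ and free variables each persist as exactly one node. I would also stress that the bound concerns the \emph{output} ZDD; some intermediate diagrams, such as $Z(\vec 1)$, are larger (of size $\Theta(kn^2)$), though still polynomial in the input.
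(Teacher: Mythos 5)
Your proof is correct and follows essentially the same route as the paper: the paper's entire proof is a pointer to a figure showing the chain-shaped ZDD (plus the remark that the variable ordering is irrelevant), and your three-way classification of the variables (forced-$1$, forced-$0$ and hence zero-suppressed, free) together with the observation that condition~3 is vacuous because the $U_i$ are pairwise disjoint is precisely the argument that figure encodes. If anything, your write-up is more rigorous than the paper's one-line proof, and your closing caveat that the $O(kn)$ bound concerns the output ZDD while intermediate diagrams such as $Z(\vec{1})$ have size $\Theta(kn^2)$ is a worthwhile point the paper addresses only implicitly through its implementation heuristics.
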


\begin{proof}
\figurename\ \ref{fig:linear} shows a constructed ZDD\@.
Note that an ordering of variables is not relevant.
No matter which ordering we impose on the variables,
we obtain an isomorphic ZDD\@.
\qed
\end{proof}

\begin{figure}[tp]
\centering
\resizebox{!}{.6\textheight}{\includegraphics{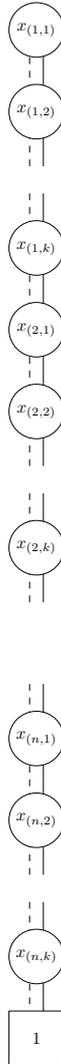}}
\caption{An example for which the number of directed binary 
perfect phylogenies is exponential, but the size of a ZDD is linear.}
\label{fig:linear}
\end{figure}

\section{Hardness of Counting}
\label{sec:hard}

As we explained in the introduction, an efficient counting algorithm can be 
used to efficient sampling of combinatorial objects.
In this section, we prove that it is unlikely that such an algorithm exists for 
the IDBPP by showing that the counting version is \#{P}-complete.

\begin{theorem}
\label{thm:sharpP}
The counting version of the IDBPP is
\#{P}-complete.
\end{theorem}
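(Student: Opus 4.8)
The plan is to prove \#{P}-completeness in two parts. Membership in \#{P} is the easy direction: a directed binary perfect phylogeny $\cS=(S_1,\ldots,S_m)$ is itself a certificate of polynomial size, and given such a candidate one can verify in polynomial time that $L_i \subseteq S_i \subseteq U_i$ for every $i$ and that the laminar condition $S_i \cap S_j \in \{S_i, S_j, \emptyset\}$ holds for every pair. So the counting version counts exactly the objects with a polynomial-time-checkable certificate, placing it in \#{P}.

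The main work is \#{P}-hardness, which I would establish by a parsimonious (or at least count-preserving) reduction from a known \#{P}-complete problem. My first choice would be to reduce from a counting problem whose solutions are naturally ``independent binary choices'' matching the structure exposed by the Proposition above, where disjoint $U_i$'s force each character's completion to be chosen freely and independently. A clean candidate is counting the independent sets, or the satisfying assignments of a monotone/antimonotone 2-CNF, or the number of antichains / downsets, since the laminar constraint in condition 3 (the trichotomy: $S_i \subseteq S_j$, $S_j \subseteq S_i$, or $S_i \cap S_j = \emptyset$) is precisely a combinatorial constraint among pairs of sets. The idea is to design gadgets: for each variable of the source instance, introduce one or more characters $i$ with prescribed $L_i \subseteq U_i$ so that the freedom in completing $S_i$ encodes a boolean choice, and for each clause or edge, introduce ground-set elements and additional characters so that the laminar compatibility condition between the relevant characters is satisfiable if and only if the clause/constraint is satisfied. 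The ground set $S$ plays the role of the ``coordinates'' along which containment or disjointness is tested.

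The hard part will be controlling the count exactly. A reduction that merely preserves satisfiability is not enough for \#{P}-hardness; I need the number of perfect phylogenies to equal (or be a fixed easily-invertible function of) the number of solutions of the source instance. The danger is that a single source solution may correspond to many phylogenies, or that spurious phylogenies arise from unintended laminar configurations among the gadget characters. To handle this I would either (i) build the gadgets so the correspondence is exactly one-to-one, pinning down the auxiliary characters completely via their $L_i = U_i$ so they contribute no extra multiplicity, or (ii) allow a uniform blow-up factor $c$ per source solution and divide out, which still yields \#{P}-hardness under Turing (counting) reductions. I would also need to verify that the trichotomy in condition 3 does not accidentally allow extra solutions: the key lemma to prove is that, in the constructed instance, every laminar completion restricted to the gadget decomposes canonically, so that the count factorizes as a product over gadgets times the source count.

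Overall, the cleanest route is probably a reduction from counting satisfying assignments of an antimonotone 2-CNF (equivalently counting independent sets in a graph), since an edge $\{u,v\}$ translates directly into a disjointness requirement $S_u \cap S_v = \emptyset$ between two characters, which is exactly case 3(c); making the remaining cases 3(a) and 3(b) unavailable (or neutral) for those pairs via a shared ``witness'' element in $U_u \cap U_v$ is the gadget I would design and then verify counts correctly.
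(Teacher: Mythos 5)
Your membership argument is fine (the paper does not even spell it out), and your high-level instinct for hardness --- a parsimonious reduction from a counting problem whose solutions are binary choices subject to pairwise disjointness constraints --- is the right one. But the concrete gadget you settle on (one character per vertex of a graph, with a shared witness element $w_{uv}\in U_u\cap U_v$ for each edge $\{u,v\}$) does not preserve counts, and the failure is precisely the coherence danger you flagged but did not resolve. A vertex $u$ of degree $d$ has $|U_u\setminus L_u|=d$, hence $2^d$ completions, and laminarity constrains each witness \emph{separately}: for each edge $\{u,v\}$, the element $w_{uv}$ may lie in neither or exactly one, but not both, of $S_u,S_v$, and these per-edge constraints are mutually independent. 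So the constructed instance has exactly $3^{|E|}$ phylogenies no matter what the graph is --- the number of independent sets is lost entirely, and no uniform blow-up factor can recover it, so neither of your fixes (i) nor (ii) applies. The natural repair, forcing $S_u$ to be ``all witnesses or none'' via auxiliary pinned characters with $L=U$, also fails: a pinned set containing two or more witnesses of $u$ must itself be laminar against every other set in the family, and one checks that such sets either violate laminarity among themselves (two of them overlap in a single witness) or force $w_{uv}\notin S_v$ for the neighbors $v$, which again destroys the encoding. Laminar families are simply too rigid to express per-vertex coherence across many shared elements.

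The way out is to choose a source problem in which every character needs only \emph{one} free element, so the binary choice is automatic and no coherence gadget is needed; with sets of size at most two overlapping in single elements, the conflicts you can realize are exactly those of \emph{matchings}, not of general independent sets. This is what the paper does: it reduces from counting matchings in a simple bipartite graph (Valiant), taking one ground element $s_v$ per vertex and one character per edge $e=\{a,b\}$ with $L_e=\{s_a\}$ and $U_e=\{s_a,s_b\}$. Then $S_e\in\{L_e,U_e\}$ for every $e$, and $\cS$ is laminar if and only if $M=\{e : S_e=U_e\}$ is a matching (two full edges sharing a vertex are distinct two-element sets meeting in one element, violating laminarity; all other configurations are nested or disjoint). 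The correspondence is bijective, so the reduction is parsimonious, and both directions are short verifications. Your independent-set idea is morally the same theorem --- matchings are independent sets in line graphs --- but restricting to line graphs of bipartite graphs is exactly what makes the gadget sound, and your proposal as written does not get there.
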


\begin{proof}
We reduce the problem of counting the number of matchings in a 
(simple) bipartite graph,
which is known to be \#{P}-complete \cite{DBLP:journals/siamcomp/Valiant79}.

Let $G=(V,E)$ be a (simple) bipartite graph with a bipartition $V=A\cup B$ of the vertex set.
For each vertex $v\in V$, we set up an element $s_v$, and
let $S=(s_v \mid v\in V)$.
Then, for each edge $e = \{a,b\} \in E$, where $a\in A$ and $b\in B$, 
let $L_e = \{s_a\}$ and $U_e=\{s_a, s_b\}$.
Then, we set up 
$\cL= (L_e \mid e\in E)$ and $\cU=(U_e \mid e\in E)$.
Note that for each $e\in E$, it holds that $L_e \subseteq U_e$.
Thus, $S$, $\cL$, and $\cU$ form an instance of the IDBPP\@.

Let $\cS=(S_e \mid e\in E)$ be a directed binary perfect phylogeny for
$(S,\cL,\cU)$.
Then, $S_e$ is either $L_e$ or $U_e$ for every $e\in E$, 
since $|L_e|=1$, $|U_e|=2$, and $L_e \subseteq S_e \subseteq U_e$.
\begin{myclaim}
\label{myclaim:A}
Let $\cS=(S_e \mid e\in E)$ be a directed binary perfect phylogeny for
$(S,\cL,\cU)$.
Then, the set $M=\{e \in E \mid S_e = U_e\}$ is a matching of $G$.
\end{myclaim}
\begin{proof}[of Claim \ref{myclaim:A}]
Suppose not.
Then, there exist two distinct edges $e,e' \in M$ that share an endpoint, say $v$.
This means that $s_v \in S_e \cap S_{e'}$.
Since $\cS$ is a laminar on $S$, it must hold that $S_e \subseteq S_{e'}$ or
$S_{e'} \subseteq S_e$.
Since $|S_e|=2=|S_{e'}|$, it follows that $S_e = S_{e'}$.
Then, $e = e'$ since $G$ is a simple graph.
This contradicts the assumption that $e$ and $e'$ are distinct edges.
\qed
\end{proof}
The following claim shows the converse.
\begin{myclaim}
\label{myclaim:B}
Let $M\subseteq E$ be a matching of $G$.
Then, the following $\cS=(S_e \mid e\in E)$ is a directed binary perfect phylogeny
for $(S,\cL,\cU)$:
$S_e = U_e$ if $e\in M$, and 
$S_e = L_e$ otherwise.
\end{myclaim}
\begin{proof}[of Claim \ref{myclaim:B}]
It suffices to prove that the constructed sequence $\cS$ is a laminar.
Consider two sets $S_e, S_{e'}$ for two distinct $e,e'\in E$.
We have three cases.
Let $e=\{a,b\}$ and $e'=\{a',b'\}$, where $a,a' \in A$ and $b,b'\in B$.
\begin{enumerate}
\item Assume that $e\in M$ and $e'\in M$. 
Then, $\{a,b\} \cap \{a',b'\} = \emptyset$, and therefore $S_e \cap S_{e'} = \emptyset$.
\item Assume that $e\in M$ and $e'\not\in M$.
If $a\neq a'$, then $S_e \cap S_{e'} = \emptyset$.
If $a=a'$, then $S_{e'} = \{s_{a'}\} \subseteq \{s_a,s_b\} = S_e$.
Therefore, $S_e \cap S_{e'} = S_{e'}$.
\item Assume that $e\not\in M$ and $e'\not\in M$.
If $a\neq a'$, then $S_e \cap S_{e'} = \emptyset$.
If $a=a'$, then $S_e = S_{e'}$.
\qed
\end{enumerate}
\end{proof}
By the claims above, the number of matchings in $G$ is equal to the number of
directed binary perfect phylogenies for $\cL$ and $\cU$.
Note that the reduction runs in polynomial time.
\qed
\end{proof}

\section{Experiments}
\label{sec:experiment}

\subsection{Data}

We have used the program \textsf{ms} by Hudson \cite{DBLP:journals/bioinformatics/Hudson02} to generate 
a random data set without incompleteness that admits a directed binary perfect phylogeny $\cS=(S_1,\ldots,S_m)$.
Then, we have constructed $L_i$ from $S_i$ by removing each element of $S_i$ 
independently with probability $p$, and 
constructed $U_i$ from $S_i$ by adding each element of $S\setminus S_i$
independently with probability $p$.

We have created $100$ instances independently at random
for each triple of values $(m,n,p) \in \{50,100\} \times \{50,100\} \times \{0.1, 0.2, 0.3, 0.4, 0.5\}$.

\subsection{Implementation and Experiment Environment}

We have implemented the algorithm $\mathsf{ZDD}$ described 
in Section \ref{sec:zdd} and 
another algorithm based on the branch-and-bound idea, which we call 
$\mathsf{B{\&}B}$.
The details of $\mathsf{B{\&}B}$ is deferred to Appendix \ref{sec:bb}.
We have implemented both algorithms in C++\@.
For the implementation of $\mathsf{B{\&}B}$, 
Step 1 uses the deterministic version of Algorithm A in the paper by Pe'er et al.\ 
\cite[p.\ 598]{DBLP:journals/siamcomp/PeerPSS04}, but we have simplified it to gain a 
practical performance.
For example, a set is represented by an integer in such a way that each element of the set
corresponds to a bit in the integer.
For $(n,m)=(50,50)$ we used a 64-bit \verb|unsigned long|, and for other cases we used
two \verb|unsigned long|s.
This enables us to perform each set-theoretic operation 
efficiently by one or two bit operations.
Further, we only count the number of directed binary perfect phylogenies, not 
outputting all of them, to avoid an inessential computation in time measurement.

For the implementation of $\mathsf{ZDD}$, we have used the library BDD+ developed by 
Minato.
Among the variables in $\{x_{i,e} \mid i\in\{1,\ldots,m\}, e\in S\}$, 
those meeting 
the condition 2 were removed beforehand, since the outgoing $1$-edge should point to 
the $0$-terminal.
Furthermore, the variables meeting the condition 1 have been put at the tail of 
the linear order on all variables.
Then, such a variable appears only once as a label of a vertex, since
the outgoing $0$-edge should point to the $0$-terminal.
These have been implemented by combining Steps 0--2 in $\mathsf{ZDD}$\@.
This also affects Step 3: some variables can be further removed, or further 
put at the tail of the linear order.
We have tried to find a complete linear order 
so that the size of the constructed ZDD could be small.
To this end, we have introduced two heuristic methods.
The first one has put the variables in the same $S_i$ as closely as possible.
Since these variables possess heavier dependency, if we would 
put them far, then the ZDD would need to store such dependency at various
locations.
The second one has put the variables in $S_i$ and $S_j$ right 
in front of what were put at the tail, and 
the operations on them corresponding to the condition 3 have been performed later 
in the execution of the algorithm, if $S_i$ and $S_j$ meet more than one
case in the condition 3.

All programs have run on the machine with the following specification;
OS: SUSE Linux Enterprise Server 10 (x86\_64);
CPU: Quad-Core AMD Opteron(tm) Processor 8393 SE
({\#}CPUs 16, {\#}Processors 32, Clock Freq.\ 3092MHz);
Memory: 512GB\@.

\subsection{The Number of Solved Instances}

We have counted the number of instances that were solved by our implementation within two minutes
for $p=0.1, 0.2$.
Here, ``solved'' means that the algorithm successfully halts.
\tablename\ \ref{tab:solvedZDD} shows the result.
As we can see from the table, 
$\mathsf{B{\&}B}$ was not able to solve most of the instances, even if they are small.
On the other hand, $\mathsf{ZDD}$ was able to solve almost all instances when $p=0.1$.
However, when $p=0.2$, the number of solved instances rapidly decreases.

\figurename\ \ref{fig:hist120zdd} shows the accumulated number of solved instances by $\mathsf{ZDD}$.
Note that the horizontal axis is in log-scale.
For $(m,n,p) = (50,50,0.1)$, $\mathsf{ZDD}$ solved each of 
the 99 instances within one second.
For $(m,n,p) = (50,100,0.1)$, it solved each of the 99 instances within five seconds.
This shows high effectiveness of the algorithm $\mathsf{ZDD}$\@.

\begin{table}[bt]
\centering
\caption{The number of solved instances by $\mathsf{B{\&}B}$ and $\mathsf{ZDD}$ out of 100 for each case.}
\begin{tabular}{|r|c|c|c|c||c|c|c|c|}
\hline
&
\multicolumn{4}{|c||}{$\mathsf{B{\&}B}$}
&
\multicolumn{4}{|c|}{$\mathsf{ZDD}$}\\
\cline{2-9}
$(m,n)$
&
$(50,50)$ & $(50,100)$ & $(100,50)$ & $(100,100)$ &
$(50,50)$ & $(50,100)$ & $(100,50)$ & $(100,100)$ \\
\hline
$p=0.1$ & 52        & 17         & 0          & 0 
&
99        & 99         & 93         & 90 \\
$p=0.2$ & 0         & 0          & 0          & 0 
&
57        & 33         & 6          & 4 \\
\hline
\end{tabular}
\label{tab:solvedZDD}
\end{table}

\begin{figure}[t]
\centering
\resizebox{0.99\textwidth}{!}{\rotatebox{-90}{\includegraphics{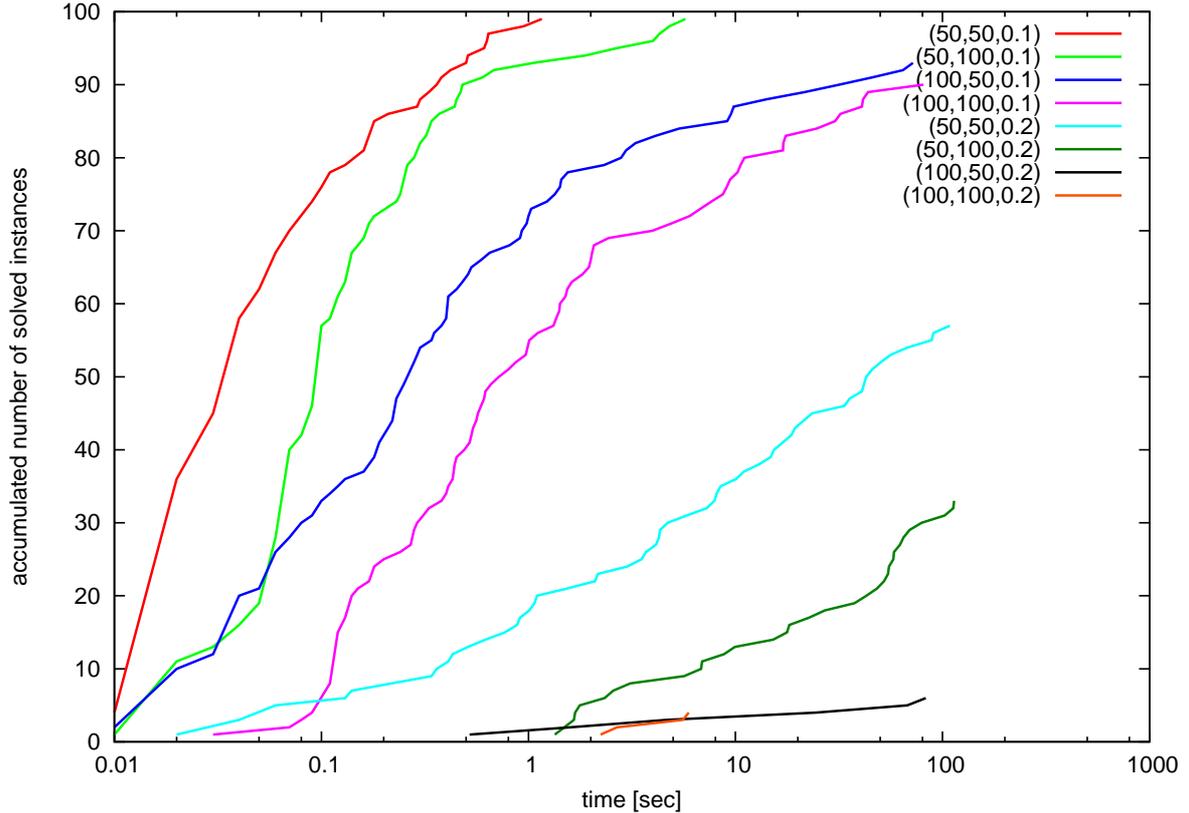}}}
\caption{The accumulated number of instances solved by $\mathsf{ZDD}$ for each case.}
\label{fig:hist120zdd}
\end{figure}

\begin{figure}[t]
\centering
\centering
\resizebox{0.99\textwidth}{!}{\rotatebox{-90}{\includegraphics{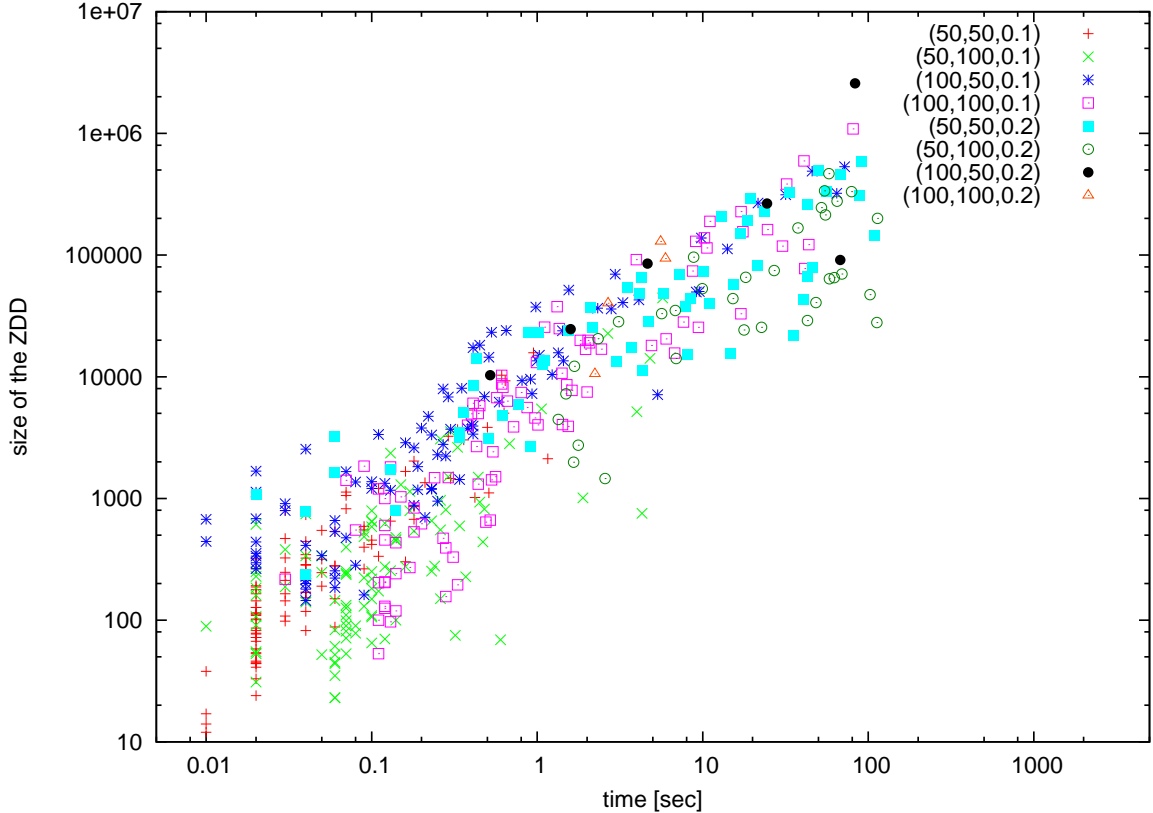}}}
\caption{The size of ZDDs and the running time of $\mathsf{ZDD}$\@.}
\label{fig:timeVSsizeZDD}
\end{figure}

\begin{figure}[t]
\centering
\resizebox{0.99\textwidth}{!}{\rotatebox{-90}{\includegraphics{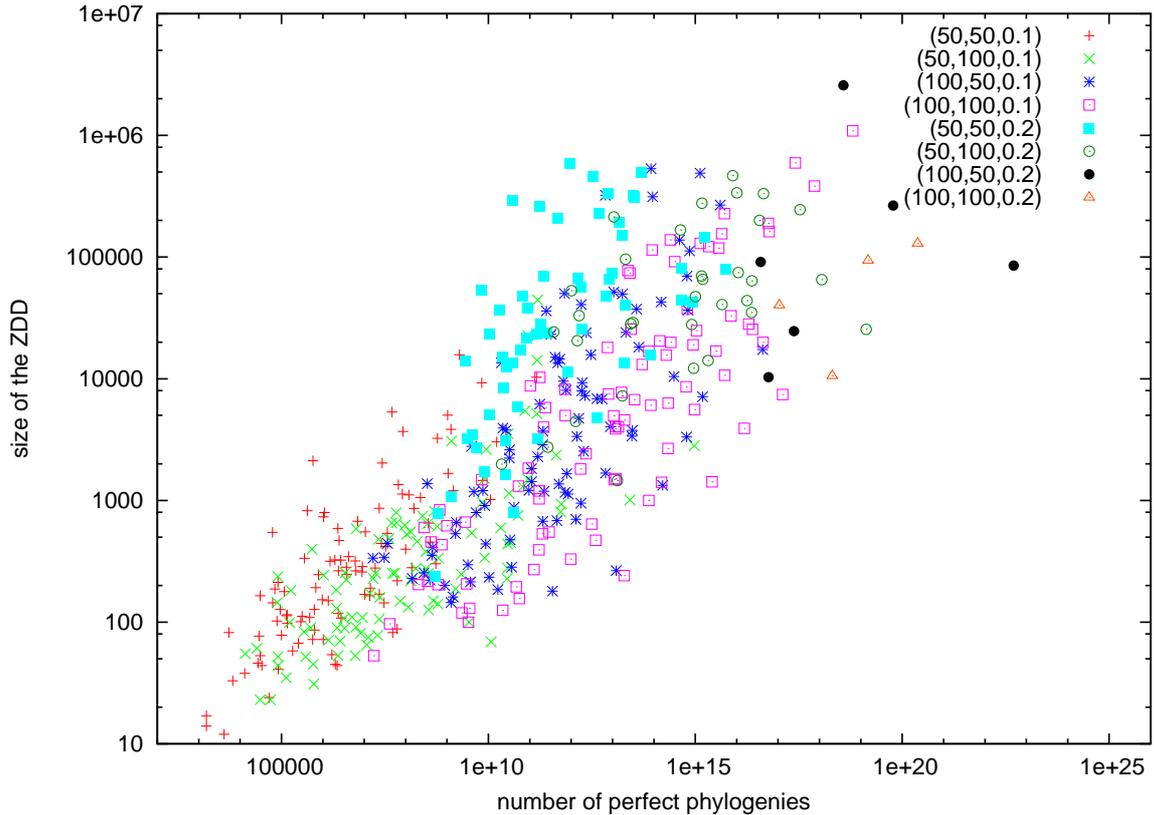}}}
\caption{The number of perfect phylogenies and the size of ZDDs.}
\label{fig:numVSsizeZDD}
\end{figure}

\begin{figure}[t]
\centering
\rotatebox{-90}{\resizebox{!}{0.99\textwidth}{\includegraphics{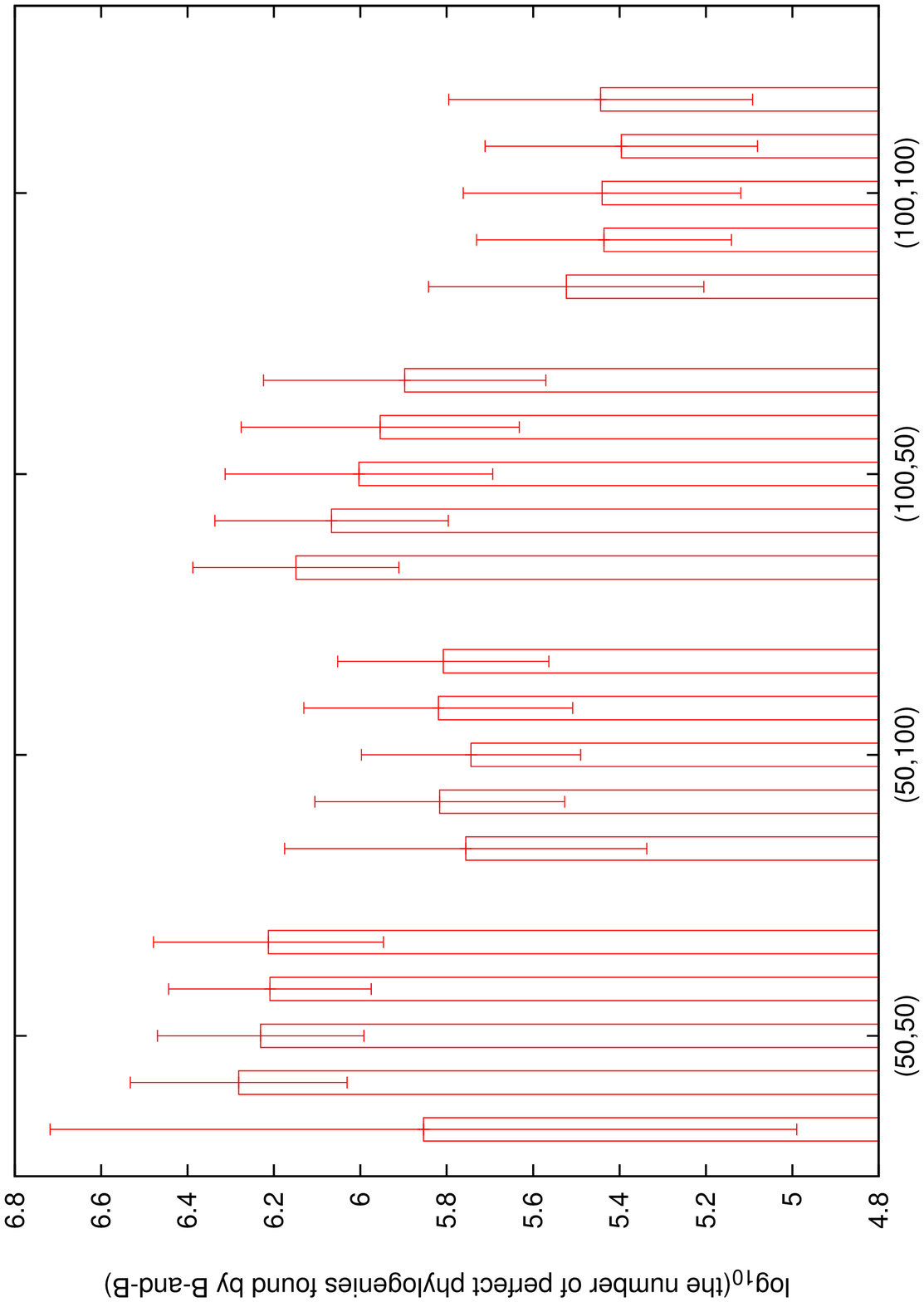}}}
\caption{The number of directed binary perfect phylogenies found by $\mathsf{B{\&}B}$ for each case.}
\label{fig:barBB}
\end{figure}

\subsection{The Running Time of $\mathsf{ZDD}$ and  the Size of ZDDs.}

\figurename\ \ref{fig:timeVSsizeZDD} shows a scatter plot in which each point
represents an instance solved by $\mathsf{ZDD}$ for $p=0.1,0.2$
with the running time (the horizontal coordinate) 
and the size of the ZDD constructed by $\mathsf{ZDD}$ (the vertical coordinate).
Note that this is a log-log plot.
We can see a tendency that the algorithm spends more time for instances with 
larger ZDDs.
A simple $\ell_2$-regression reveals that the spent time is dependent on the size almost linearly.

\subsection{The Number of Perfect Phylogenies and the Size of ZDDs.}

\figurename\ \ref{fig:numVSsizeZDD} shows a log-log scatter plot in which each point
represents an instance solved by $\mathsf{ZDD}$ for $p=0.1,0.2$
with the number of perfect phylogenies (the horizontal coordinate) 
and the size of the ZDD constructed by $\mathsf{ZDD}$ (the vertical coordinate).
The plot exhibits high compression rate of ZDDs.
If we define the \emph{logarithmic compression ratio} of ZDD by the 
logarithm (with base 10) of the size of ZDD divided by the number of 
perfect phylogenies, then \tablename\ \ref{tab:compression} presents the means 
and the standard deviations of the logarithmic compression ratio of 
the instances solved by $\mathsf{ZDD}$ categorized by the choice of parameters.
It shows the high-rate compression by ZDDs, and for larger values 
of parameters the compression ratios get larger.
Among the solved instances, the logarithmic compression ratios range
from $-17.77$ to $-1.82$.
Namely, for the most extreme case, the size of ZDD is approximately 
$10^{17.77}$ times smaller than the number of perfect phylogenies.

\begin{table}[t]
\centering
\caption{The means and the standard deviations of logarithmic compression ratios.}
\begin{tabular}{|c|r|r|r|r|r|r|r|r|}
\hline
$p$     & \multicolumn{4}{|c|}{$0.1$} & \multicolumn{4}{|c|}{$0.2$} \\
\cline{2-9}
$(m,n)$ & 
\multicolumn{1}{|c|}{$(50,50)$} &
\multicolumn{1}{|c|}{$(50,100)$} &
\multicolumn{1}{|c|}{$(100,50)$} &
\multicolumn{1}{|c|}{$(100,100)$} &
\multicolumn{1}{|c|}{$(50,50)$} &
\multicolumn{1}{|c|}{$(50,100)$} &
\multicolumn{1}{|c|}{$(100,50)$} &
\multicolumn{1}{|c|}{$(100,100)$} \\
\hline
mean & 
$-4.13$ & $-7.25$ & $-5.62$ & $-10.00$
& $-8.06$ & $-13.61$ & $-9.24$ & $-14.04$ \\
\hline
standard deviation & 
$1.22$ & $1.35$ & $1.74$ & $1.79$
& $1.48$ & $2.04$ & $1.86$ & $1.02$ \\
\hline
\end{tabular}
\label{tab:compression}
\end{table}

\subsection{The Number of Solutions Found by \textsf{B{\&}B}}

Unlike $\mathsf{ZDD}$, the algorithm $\mathsf{B{\&}B}$ can output some directed binary perfect phylogenies
even if the execution is interrupted.
\figurename\ \ref{fig:barBB} shows the averages of the logarithm
of the numbers of directed binary perfect phylogenies 
(together with standard deviations)
found by $\mathsf{B{\&}B}$ within two minutes for each case: 
Four groups correspond to $(m,n)=(50,50), (50,100), (100,50), (100,100)$
from left to right, and in each group there are five bars corresponding to $p=0.1,0.2,0.3,0.4,0.5$ 
from left to right.
When $(m,n,p) = (50,50,0.1)$, the standard deviation is high since about a half of the instances
were solved within two minutes.
Even for the seemingly difficult case $(m,n,p)=(100,100,0.5)$, $\mathsf{B{\&}B}$ was able to find around
$10^{5.4}$ perfect phylogenies.
This suggests that $\mathsf{B{\&}B}$ can be useful even if $\mathsf{ZDD}$ does not finish the computation.

\begin{figure}[t]
\centering
\rotatebox{-90}{\resizebox{!}{0.99\textwidth}{\includegraphics{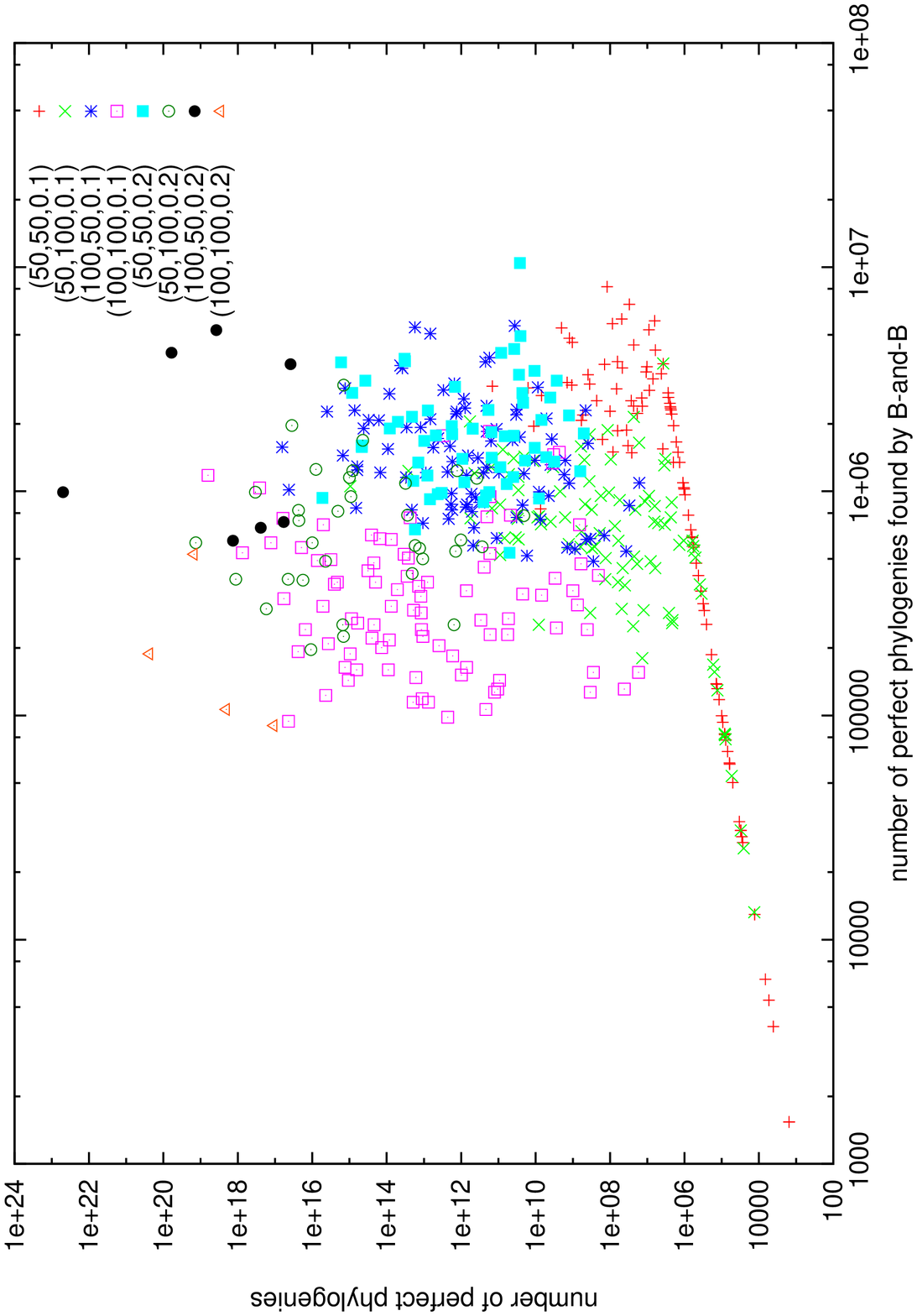}}}
\caption{The number of directed binary perfect phylogenies found by $\mathsf{B{\&}B}$ for each case.}
\label{fig:scatterNumBB}
\end{figure}

\begin{figure}[t]
\centering
\resizebox{0.99\textwidth}{!}{\rotatebox{-90}{\includegraphics{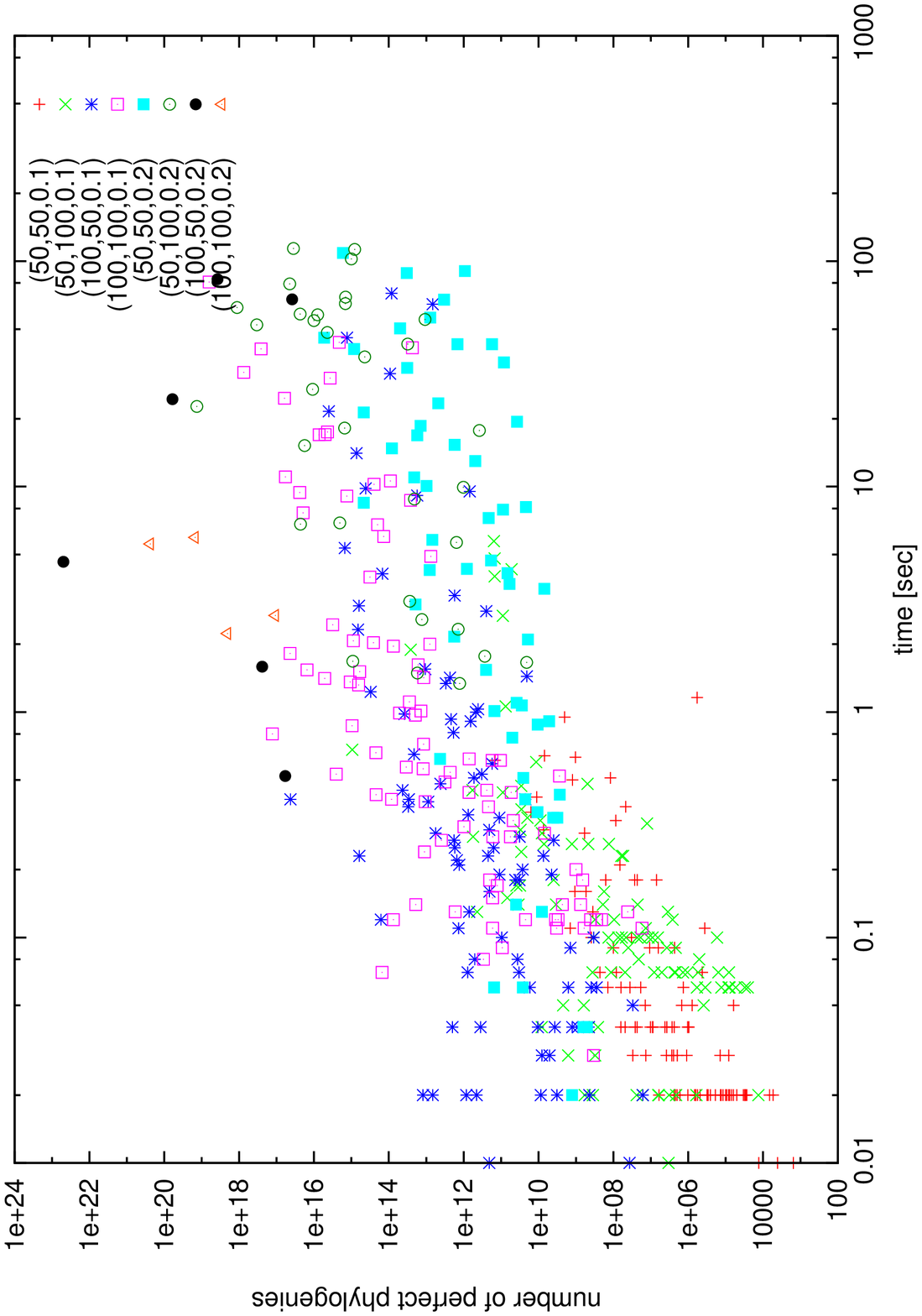}}}
\caption{The number of directed binary perfect phylogenies in the instances solved by $\mathsf{ZDD}$ for each case.}
\label{fig:scatter120zdd}
\end{figure}

\subsection{The Number of Solutions Found by $\mathsf{ZDD}$ and $\mathsf{B{\&}B}$\@.}

\figurename\ \ref{fig:scatterNumBB} is a scatter plot in which each point
represents an instance solved by $\mathsf{ZDD}$ with the number of directed binary 
perfect phylogenies found by $\mathsf{B{\&}B}$ within two minutes (the horizontal coordinate) 
and the number of directed binary perfect phylogenies in the instance (the vertical coordinate).
This shows the percentage of the directed binary perfect phylogenies that were found by 
$\mathsf{B{\&}B}$\@.
Since this is a log-log plot, we can see that this percentage is quite low.
There is one instance for $(m,n,p)=(100,50,0.2)$ with 
49,614,003,829,608,756,019,200 perfect phylogenies 
for which $\mathsf{B{\&}B}$ could only find 991,232.
Thus the percentage is around $10^{-17}$ \%\@.
This really shows the power of ZDDs\@.

\subsection{Running Time of $\mathsf{ZDD}$ and the Number of Solutions}

\figurename\ \ref{fig:scatter120zdd} shows a scatter plot in which each point
represents an instance solved by $\mathsf{ZDD}$ for $p=0.1,0.2$
with the running time (the horizontal coordinate) 
and the number of directed binary perfect phylogenies in the instance (the vertical coordinate).
Note that this is a log-log plot.
There is a weak tendency that the algorithm spends more time for instances with more 
directed binary perfect phylogenies.
We can see that the algorithm is able to solve an instance with more than $10^{17}$ perfect 
phylogenies within one second.

\subsection{The Size of ZDDs During the Execution of $\mathsf{ZDD}$\@.}

\figurename\ \ref{fig:intermediate} traces the size of ZDDs which are created as 
intermediate results during the execution of (the original version of) the algorithm $\mathsf{ZDD}$\@.
In the plot, there are two curves, each of which corresponds to a different instance
for $(m,n,p)=(50,50,0.2)$.
We have measured the size after each execution of Step 3 in the algorithm.
Step 3 is iterated by the number of pairs of distinct integers in $\{1,\ldots,n\}$, which is
$\binom{50}{2}=1,225$.
Therefore, the horizontal coordinates in the plot range from $0$ to $1,224$, and 
the $i$-th iteration gives a point at $i{-}1$ in the horizontal coordinate.
The vertical coordinate corresponds to the size of the ZDD\@.
Notice that this is a semi-log plot.

For the red instance, the algorithm (with heuristic improvements) spent $2.16$ seconds to solve, and 
for the green instance, it spent $108.71$ seconds to solve.
In this sense, the green one is a harder instance than the red one.
As we can see from the figure, the size of ZDDs are changing over time
non-monotonously.
For the red instance, the size of the final result is $25,414$, while 
the maximum size during the execution is $26,174$;
the ratio is $1.03$.
On the other hand, for the green instance, 
the size of the final result is $144,100$, while 
the maximum size during the execution is $271,037$;
the ratio is $1.88$.

\begin{figure}[t]
\centering
\resizebox{0.99\textwidth}{!}{\rotatebox{-90}{\includegraphics{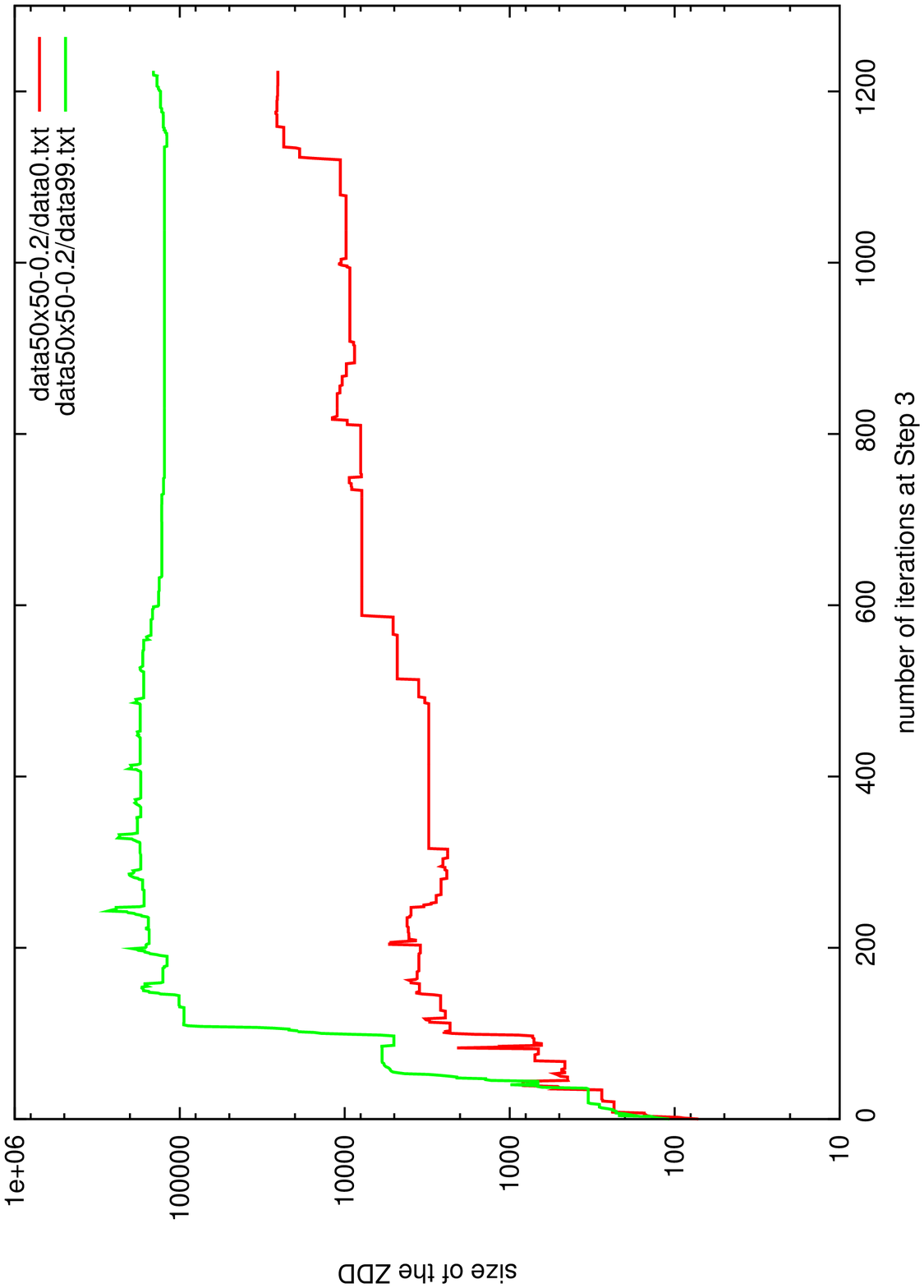}}}
\caption{The size of ZDDs during the execution of $\mathsf{ZDD}$\@.}
\label{fig:intermediate}
\end{figure}

\section{Conclusion}
\label{sec:conclusion}

We have presented the algorithm $\mathsf{ZDD}$ to 
enumerate all directed binary perfect phylogenies
from incomplete data, and compare it with the algorithm $\mathsf{B{\&}B}$ 
based on a simple branch-and-bound idea.
Theoretically, $\mathsf{B{\&}B}$ runs in polynomial time, but $\mathsf{ZDD}$  has 
no such guarantee.
In experiments, $\mathsf{ZDD}$ solved more instances than $\mathsf{B{\&}B}$\@.
This shows some gap between theory and practice, and it is desirable to have 
some theoretical justification why $\mathsf{ZDD}$ can outperform.
We have theoretically exhibited an example for which the compression by a ZDD is effective.
However, that example was artificial.
The experiments also show ZDD can compress very well on random instances.
It is desirable to obtain a more natural theoretical evidence why such a good compression is 
achieved.

The approach by ZDDs looks quite promising, and there must be more problems 
in bioinformatics that can get benefits from them.

\section*{Acknowledgments}
We thank Jesper Jansson for bringing the problem into our attention,
and Jun Kawahara and Yusuke Kobayashi for a fruitful discussion.
We also thank the anonymous referees of SEA 2012 for detailed comments.

\bibliographystyle{abbrv}
\bibliography{haplotype}


\appendix

\section{Appendix: Details for the Branch-and-Bound Enumeration Algorithm}
\label{sec:bb}

In our branch-and-bound algorithm, at every node of a search tree,
we make a decision whether a specified element $e$ of $S$ is contained in $S_j$ for 
a specified index $j$.
The following observation is easy to obtain.
\begin{lemma}
\label{lemma:recursion}
Let $S$ be a finite set,
$\cL = (L_1,\ldots,L_m)$ and
$\cU = (U_1,\ldots,U_m)$ be sequences of $m$ subsets of $S$ such that 
$L_i \subseteq U_i \subseteq S$ for all $i \in \{1,\ldots,m\}$,
and 
$\cS = (S_1,\ldots,S_m)$ be a directed binary perfect phylogeny for 
$\cL$ and $\cU$.
\begin{enumerate}
\item If $L_i = U_i$ for all $i\in\{1,\ldots,m\}$, then 
$\cS$ is a unique directed binary perfect phylogeny for $\cL$ and $\cU$.
\item If $e\in S_j \setminus L_j$ for some $j$, then $\cS$ is a directed binary
perfect phylogeny for $\cL'$ and $\cU$, where 
$\cL' = (L'_1, \ldots, L'_m)$ is defined as $L'_i = L_i$ for $i\neq j$ and
$L'_j = L_j \cup \{e\}$.
\item If $e\in U_j \setminus S_j$, for some $j$, then $\cS$ is a directed binary
perfect phylogeny for $\cL$ and $\cU'$, where $\cU'=(U'_1,\ldots,U'_m)$ is defined as
$U'_i=U_i$ for $i\neq j$ and $U'_j = U_j \setminus \{e\}$.
\qed
\end{enumerate}
\end{lemma}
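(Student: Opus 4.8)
The plan is to verify directly that the sequence $\cS$, left unchanged, continues to satisfy the defining conditions of a directed binary perfect phylogeny after each of the described modifications. The key observation throughout is that in the second and third parts we do not alter $\cS$ at all; we only tighten one of the bounding sequences $\cL$ or $\cU$. Since the sets $S_1,\ldots,S_m$ are untouched, the laminar property — that $S_i \cap S_j \in \{S_i, S_j, \emptyset\}$ for all $i,j$ — is preserved automatically, and the only thing left to check is that the containment constraints $L'_i \subseteq S_i \subseteq U_i$ (respectively $L_i \subseteq S_i \subseteq U'_i$) still hold.

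For the first part I would note that $L_i \subseteq S_i \subseteq U_i = L_i$ forces $S_i = L_i$ for every $i$, so $\cS$ coincides with $\cL$ and is therefore the only candidate; uniqueness is immediate. For the second part, the constraints for every index $i \neq j$ are unchanged, so they continue to hold, and for index $j$ I would check the lower bound: from $L_j \subseteq S_j$ and the hypothesis $e \in S_j$ we get $L'_j = L_j \cup \{e\} \subseteq S_j$, while the upper bound $S_j \subseteq U_j$ is inherited. The third part is symmetric: for index $j$ the hypothesis $e \in U_j \setminus S_j$ gives $e \notin S_j$, so $S_j \subseteq U_j$ sharpens to $S_j \subseteq U_j \setminus \{e\} = U'_j$, and the lower bound $L_j \subseteq S_j$ is inherited.

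There is no real obstacle here — the statement is a routine bookkeeping lemma whose whole content is that a phylogeny already \emph{commits} to the membership or non-membership of $e$ in $S_j$, so that recording this commitment in the bounds $\cL$ or $\cU$ cannot invalidate it. The one point worth stating explicitly is that each modified pair $(\cL', \cU)$ or $(\cL, \cU')$ is still a legitimate IDBPP instance, i.e. that the new lower bound remains contained in the corresponding upper bound; but this follows at once from the chains $L'_j \subseteq S_j \subseteq U_j$ and $L_j \subseteq S_j \subseteq U'_j$ that the verification above already establishes.
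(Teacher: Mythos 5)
Your proof is correct: the paper itself offers no written argument for this lemma (it is stated as an easy observation with the proof omitted), and your direct verification---that $\cS$ is untouched, laminarity is automatic, and only the containments $L'_j \subseteq S_j$ and $S_j \subseteq U'_j$ need checking---is precisely the routine argument the paper treats as immediate. Your added remark that the modified pairs remain valid IDBPP instances (i.e.\ $L'_j \subseteq U_j$ and $L_j \subseteq U'_j$) is a worthwhile detail, since the recursive calls in Algorithm $\mathsf{B\&B}$ rely on exactly that precondition.
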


Lemma~\ref{lemma:recursion} suggests the following algorithm.
Step 1 is the bounding step, and Step 3 is the branching step.

\begin{description}
\item[Algorithm:] $\mathsf{B{\&}B}(S,\cL,\cU)$
\item[Precondition: ] $S$ is a finite set, $\cL=(L_1,\ldots,L_m)$, $\cU=(U_1,\ldots,U_m)$, each
member of $\cL$ and $\cU$ is a subset of $S$, and 
$L_i \subseteq U_i$ for every $i\in\{1,\ldots,m\}$.
\item[Postcondition:  ] Output all the directed binary perfect phylogenies for $(S,\cL,\cU)$.
\item[Step 1: ] If there exists no directed binary perfect phylogeny
for $\cL$ and $\cU$, then output nothing and halt.
\item[Step 2: ] Otherwise, if $L_i=U_i$ for all $i\in\{1,\ldots,m\}$, then set $S_i = L_i$ for all
$i\in\{1,\ldots,m\}$, output $(S_1,\ldots,S_m)$ and halt.
\item[Step 3:] Otherwise, let $j\in\{1,\ldots,m\}$ be an arbitrary index such that 
$L_j \neq U_j$.
Choose an arbitrary element $e\in U_j \setminus L_j$.
\item[Step 3-1: ] Let $\cL' := (L'_1,\ldots,L'_m)$ be defined as $L'_i = L_i$ for all $i\neq j$,
and $L'_j=L_j\cup\{e\}$.  
Then, run  $\mathsf{B{\&}B}(S,\cL',\cU)$.
\item[Step 3-2: ] Let $\cU' := (U'_1,\ldots,U'_m)$ be defined as $U'_i = L_i$ for all $i\neq j$,
and $U'_j=U_j\setminus\{e\}$. 
Then, run  $\mathsf{B{\&}B}(S,\cL,\cU')$.
\item[Step 4:] Halt.
\end{description}

At Step 1, we may use any algorithm to check whether an instance $(S,\cL,\cU)$ admits a 
directed binary perfect phylogeny, e.g.\ one by Pe'er et al.\ \cite{DBLP:journals/siamcomp/PeerPSS04}.
Their algorithm actually outputs a directed binary perfect phylogeny $\cS=(S_1,\ldots,S_m)$ for
$(S,\cL,\cU)$ if it exists.
This $\cS$ can be used as further information, for example at Step 3 of Algorithm $\mathsf{B{\&}B}$.
We choose $e \in U_j \setminus L_j$ there.
We have two cases.
Remind that $L_j \subseteq S_j \subseteq U_j$ (by definition) and 
$L_j \neq U_j$ (by Step 2).
\begin{enumerate}
\item If $e\in S_j \setminus L_j$, then in the call $\mathsf{B{\&}B}(S,\cL',\cU)$
at Step 3-1 we do not have to perform Step 1
since $\cS$ is a directed binary perfect phylogeny for $(S,\cL',\cU)$.
\item If $e\in U_j \setminus S_j$, then in the call $\mathsf{B{\&}B}(S,\cL,\cU')$
at Step 3-2 we do not have to perform Step 1
since $\cS$ is a directed binary perfect phylogeny for $(S,\cL,\cU')$.
\end{enumerate}

The correctness of the algorithm is immediate.
We now bound the running time.
The relevant parameters are $m$, $n=|S|$, $k=\sum_{i=1}^{m}|U_i \setminus L_i|$, and
the number $h$ of output directed binary perfect phylogenies.
Let $t(m,n,k)$ be the worst-case time complexity of the algorithm that we use for Step 1.
Also, let $T(m,n,k,h)$ be the worst-case time complexity of 
the execution of $\mathsf{B{\&}B}(S,\cL,\cU)$ with these parameters.
If $k=0$, then $T(m,n,k,h) = O(mn)$
since Step 2 already takes $O(mn)$ time.
If $h=0$, then $T(m,n,k,h) = O(mn) + t(m,n,k)$.
Otherwise, 
\begin{displaymath}
T(m,n,k,h) \leq T(m,n,k-1,h_1)+T(m,n,k-1,h_2)+ O(mn) + t(m,n,k),
\end{displaymath}
where $h=h_1+h_2$.
This leads to
$T(m,n,k,h) \leq O(kh(mn + t(m,n,k)))$.

If we use the algorithm by 
Pe'er et al.\ \cite{DBLP:journals/siamcomp/PeerPSS04}, which runs in $\tilde{O}(mn)$ time,\footnote{%
The $\tilde{O}$-notation suppresses the polylogarithmic factor.
} at Step 1,
then 
we obtain the following theorem.

\begin{theorem}
The execution $\mathsf{B{\&}B}(S,\cL,\cU)$ correctly outputs all the directed binary
perfect phylogenies for $(S,\cL,\cU)$ without duplication in time
$\tilde{O}(mnkh)$ time, where $m$ is the length of the sequences $\cL,\cU$, 
$n=|S|$, $k=\sum_{i=1}^{m}|U_i \setminus L_i|$, and
the number $h$ of output directed binary perfect phylogenies.
In particular, each directed binary perfect phylogeny can be found in 
polynomial time (in the input size) per output, in the amortized sense.
\qed
\end{theorem}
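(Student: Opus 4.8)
The plan is to establish the three assertions — exhaustiveness, absence of duplication, and the time bound — by handling correctness via induction on $k=\sum_{i=1}^m|U_i\setminus L_i|$, and the running time by solving the branching recurrence already set up immediately before the statement.

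For correctness I would induct on $k$. When $k=0$ we have $L_i=U_i$ for every $i$, so by Lemma~\ref{lemma:recursion}(1) there is at most one perfect phylogeny; Step~1 detects infeasibility and Step~2 outputs the unique forced laminar when it exists, so the base case is exact. For the inductive step, fix the index $j$ and the element $e\in U_j\setminus L_j$ chosen in Step~3. Every perfect phylogeny $\cS=(S_1,\ldots,S_m)$ for $(S,\cL,\cU)$ satisfies either $e\in S_j$ or $e\notin S_j$, and these cases are mutually exclusive. In the first case $e\in S_j\setminus L_j$, so by Lemma~\ref{lemma:recursion}(2) $\cS$ is a perfect phylogeny for $(S,\cL',\cU)$ and is enumerated by the call in Step~3-1; in the second case $e\in U_j\setminus S_j$, so by Lemma~\ref{lemma:recursion}(3) it is a perfect phylogeny for $(S,\cL,\cU')$ and is enumerated in Step~3-2. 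Conversely, any laminar produced by Step~3-1 contains $e$ in its $j$-th set (since $L'_j=L_j\cup\{e\}$), while any laminar produced by Step~3-2 omits $e$ from its $j$-th set (since $U'_j=U_j\setminus\{e\}$); hence the two recursive calls enumerate disjoint families whose union is exactly the set of all perfect phylogenies for $(S,\cL,\cU)$. Since each subcall strictly decreases $k$ by one, the inductive hypothesis gives both exhaustiveness and no duplication.

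For the running time I would charge each recursive call the cost $c:=O(mn)+t(m,n,k)$ of one feasibility test together with the bookkeeping of Steps~2–3, and bound the total over all calls. The only delicate point is that some branches may be infeasible and output nothing, so the search tree does not have exactly one leaf per output. This is precisely what the recurrence $T(m,n,k,h)\le T(m,n,k-1,h_1)+T(m,n,k-1,h_2)+c$ with $h=h_1+h_2$ absorbs, a dead-end subcall being the base case $T(m,n,k-1,0)=c$. I would solve it by induction on $k$, proving $T(m,n,k,h)\le c\,(2kh+1)$: the base case $k=0$ gives $T=c$, and the inductive step reduces to checking $c(2(k-1)h_1+1)+c(2(k-1)h_2+1)+c\le c(2kh+1)$, which holds precisely because $h=h_1+h_2\ge 1$ at every branching node — Step~1 halts on infeasible instances before any branching, so we only split feasible instances, for which $h\ge 1$. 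Substituting the $\tilde O(mn)$ bound of Pe'er et al.\ \cite{DBLP:journals/siamcomp/PeerPSS04} for $t(m,n,k)$ makes $c=\tilde O(mn)$ and yields $T=\tilde O(mnkh)$. The amortized per-output claim then follows by dividing by $h$, giving $\tilde O(mnk)$ per output, which is polynomial in the input size since $k\le mn$.

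I expect the main obstacle to be the correct accounting of the infeasible branches in the time analysis: without the conservation identity $h=h_1+h_2$ and the guarantee that branching occurs only at feasible nodes, the number of dead-end leaves could a priori blow up and destroy the output-sensitive bound. The witness phylogeny returned by the algorithm of Pe'er et al.\ \cite{DBLP:journals/siamcomp/PeerPSS04} at Step~1 is what underpins this guarantee — it certifies that at least one child of every branching node is feasible, lets the algorithm reuse the parent's certificate on that child, and keeps the induction clean.
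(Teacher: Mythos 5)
Your proposal is correct and takes essentially the same approach as the paper: correctness is obtained from Lemma~\ref{lemma:recursion} via the disjoint-partition-on-$e$ argument (which the paper declares ``immediate''), and the running time is bounded by exactly the same recurrence $T(m,n,k,h)\le T(m,n,k-1,h_1)+T(m,n,k-1,h_2)+O(mn)+t(m,n,k)$ with $h=h_1+h_2$ and base cases $k=0$ and $h=0$, yielding $T=O\bigl(kh(mn+t(m,n,k))\bigr)=\tilde O(mnkh)$ after substituting the Pe'er et al.\ bound. Your explicit inductive solution $T\le c(2kh+1)$, hinging on $h\ge 1$ at every branching node, just spells out the step the paper leaves implicit.
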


For the experiment in Section \ref{sec:experiment}, 
we use the deterministic version of Algorithm A in the paper by Pe'er et al.\ 
\cite[p.\ 598]{DBLP:journals/siamcomp/PeerPSS04} as a subroutine
in Step 1, but we have simplified it to gain a 
practical performance.

\end{document}